\theoremstyle{plain}
\newtheorem{theorem}{Theorem}[section]
\newtheorem{proposition}[theorem]{Proposition}
\newtheorem{lemma}[theorem]{Lemma}
\theoremstyle{definition}
\newtheorem{definition}[theorem]{Definition}
\theoremstyle{remark}
\newcommand{\Rho}{\mathrm{P}}
\title{A Proof that Coarse Correlated Equilibrium Implies Nash Equilibrium in Two-Player Zero-Sum Games}
\author{Revan MacQueen}
\date{2023}
\begin{document}

\maketitle

\begin{abstract}
    We give a simple proof of the well-known result that the marginal strategies of a coarse correlated equilibrium form a Nash equilibrium in two-player zero-sum games. A corollary of this fact is that no-external-regret learning algorithms that converge to the set of coarse correlated equilibria will also converge to Nash equilibria in  two-player zero-sum games. We show an approximate version: that $\epsilon$-coarse correlated equilibria imply $2\epsilon$-Nash equilibria.
\end{abstract}

\section{Background}
\begin{flushleft}

It is well-known that the marginal strategies of a coarse correlated equilibrium (CCE) are Nash equilibria in two-player zero-sum games. We present a simple proof of this fact, which may be useful to others. 

\medskip

This fact is useful, since a no-external-regret learning algorithm's average strategy profile will approach the set of CCE \citep{Greenwald-2003}, and so will compute an approximate Nash equilibrium in two-player zero-sum games. Hence, the marginal strategies of a CCE in two-player zero-sum games have  the desirable properties of Nash equilibrium, including bounded exploitability \citep{Fudenberg-1991}.

\begin{definition}
    A normal form game $G$ is a 3 tuple $G = (N, \Rho, u)$ where $N$ is a set of players, $\Rho = \bigtimes_{i \in N} \Rho_i$ is a joint pure strategy space where $\Rho_i$ is a set of \emph{pure strategies} for player $i$. $u = (u_i)_{i \in N}$ is a set of utility functions where $u_i : \Rho \to \mathbb{R}$.
\end{definition}

Pure strategies are deterministic choices of actions in the game. In a normal-form game, each agent simultaneously  chooses their pure strategy $\rho_i \in \Rho_i$; we  call a joint selection of pure strategies $\rho \in \Rho$ a \emph{pure strategy profile}. The pure strategy profile determines the payoff (or utility) to player via the utility function.

\medskip

Players may also randomize their actions through the use of a \emph{mixed strategy}: a probability distribution $s_i$ over $i$'s pure strategies. Let $S_i = \Delta(\Rho_i)$ be the set of player $i$'s mixed strategies (where $\Delta(X)$ denotes the set of probability distributions over a domain $X$), and let $S=\bigtimes_{i \in N} S_i$ be the set of mixed strategy profiles.We overload the definition of utility function to accept mixed strategies as follows:
\begin{align*}
    u_i(s) = \sum_{\rho \in \Rho} \left(\prod_{i \in N} s_i(\rho_i) \right) u_i(\rho)
\end{align*}

In a two player game, when one player plays a pure strategy while the other plays a mixed strategy, we write 
\begin{align*}
    u_i(\rho_i, s_{-i}) = \sum_{\rho_{-i} \in \Rho_{-i}}  s_{-i}(\rho_{-i}) u_i(\rho_i, \rho_{-i})
\end{align*}

Where $-i$ is the other player from player $i$ in a two-player game. 
\begin{definition}
    A two-player game $G = (\{i, -i\}, \Rho, u) $ is  zero-sum\footnote{Note that two-player zero-sum games are no less general than 
 two-player constant-sum games, since we can turn any two-player constant-sum game into a strategically identical zero-sum game by either subtracting or adding a constant to each player's utility.} if 
    \begin{align*}
        u_i(\rho) + u_{-i}(\rho) = 0 \quad \forall \rho \in \Rho
    \end{align*} 
\end{definition}

\begin{definition}[$\epsilon$-Nash Equilibrium]
    A strategy profile $s$ is an  $\epsilon$-Nash equilibrium if $\forall i \in N$, we have 
    \begin{align*}
        u_i(s'_i, s_{-i}) -   u_i(s_i, s_{-i})  \leq \epsilon \quad \forall  s'_i \in S_i
    \end{align*}
\end{definition}

\begin{definition}[$\epsilon$-CCE \citep{Moulin-1978}]
    We say $\mu \in \Delta(\Rho)$ is an $\epsilon$-CCE if $\forall i \in N$, we have
    \begin{align*}
        \mathbb{E}_{\rho \sim \mu} \left[ u_i(s_i', \rho_{-i})-  u_i(\rho)  \right] \leq \epsilon \quad \forall s_i' \in S_i 
    \end{align*}
\end{definition}

\begin{definition}[Marginal strategy]
    Given  $\mu \in \Delta(\Rho)$, let $s_i^\mu$ be the \emph{marginal strategy} for $i$, where $s_i^\mu(\rho_i) \doteq \sum_{\rho_{-i} \in \Rho_{-i}} \mu(\rho_i, \rho_{-i})$. Let $s^\mu$ be a \emph{marginal strategy profile}, where each player plays their marginal strategy from $\mu$.
\end{definition}

\section{The Proof}
Here we show a simple proof that CCE imply Nash in two-player zero-sum games. We begin showing that if a player $i$ deviates from their CCE recommendations, if  $-i$ continues to play their CCE recommendations, this is equivalent to $-i$ playing their marginal strategy for the CCE.

\begin{lemma}\label{lemma:2.1}
      If $\mu \in \Delta(\Rho)$, then in a two-player game, for any deviation $s_i$, we have $\mathbb{E}_{\rho \sim \mu} \left[u_{i}(s_i, \rho_{-i}) \right]  = u_i(s_i, s^\mu_{-i})$ . 
\end{lemma}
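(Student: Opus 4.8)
The plan is to unwind both sides into sums over pure strategy profiles and show they coincide after a single reindexing. First I would expand the left-hand side directly from the definition of expectation under $\mu$, writing $\mathbb{E}_{\rho \sim \mu}\left[u_i(s_i, \rho_{-i})\right] = \sum_{\rho \in \Rho} \mu(\rho)\, u_i(s_i, \rho_{-i})$. The decisive observation is that the summand $u_i(s_i, \rho_{-i})$ depends on the profile $\rho = (\rho_i, \rho_{-i})$ only through its second coordinate $\rho_{-i}$, carrying no dependence on $\rho_i$ at all. This lets me split the double sum and push the sum over $\rho_i$ inward, so the expression becomes $\sum_{\rho_{-i} \in \Rho_{-i}} \left(\sum_{\rho_i \in \Rho_i} \mu(\rho_i, \rho_{-i})\right) u_i(s_i, \rho_{-i})$.

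The next step is to recognize the inner parenthesized sum as exactly the marginal strategy: by the definition of the marginal strategy, $\sum_{\rho_i \in \Rho_i} \mu(\rho_i, \rho_{-i}) = s^\mu_{-i}(\rho_{-i})$. Substituting yields $\sum_{\rho_{-i} \in \Rho_{-i}} s^\mu_{-i}(\rho_{-i})\, u_i(s_i, \rho_{-i})$. To finish, I would expand $u_i(s_i, \rho_{-i}) = \sum_{\rho_i \in \Rho_i} s_i(\rho_i)\, u_i(\rho_i, \rho_{-i})$ and confirm that the resulting double sum $\sum_{\rho_i, \rho_{-i}} s_i(\rho_i)\, s^\mu_{-i}(\rho_{-i})\, u_i(\rho_i, \rho_{-i})$ matches the overloaded mixed-strategy utility $u_i(s_i, s^\mu_{-i})$ verbatim.

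I expect no genuine obstacle: the content is entirely a bookkeeping exercise in the definitions, justified throughout by the finiteness of the strategy sets (so interchanging the two sums is automatic). The only place to stay careful is the overloaded utility notation — I must track which argument is pure and which is mixed and apply the interchange to the correctly factored summand. The substantive idea, the single reason the lemma holds, is that player $i$'s payoff against a fixed opponent pure strategy is insensitive to $i$'s own component of the profile, so averaging over the full joint distribution $\mu$ collapses to averaging over its $(-i)$-marginal.
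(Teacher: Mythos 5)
Your proof is correct and follows essentially the same route as the paper's: both expand the expectation, use the fact that the integrand $u_i(s_i,\rho_{-i})$ is independent of $\rho_i$ to collapse $\mu$ to its marginal $s^\mu_{-i}$, and then recombine the double sum into the mixed-strategy utility $u_i(s_i, s^\mu_{-i})$. Your version is actually slightly more careful in spelling out the marginalization step, which the paper compresses into a single line.
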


\begin{proof}
    We have
    \begin{align*}
        \mathbb{E}_{\rho \sim \mu} \left[u_{i}(s_i, \rho_{-i}) \right] &= \sum_{\rho_i \in \Rho_i} \sum_{\rho_{-i} \in \Rho_{-i}} s_i(\rho_i) \mu_{-i}(\rho_{-i}) u_i(\rho_i, \rho_{-i}) ,
    \end{align*} 
    where $\mu_{-i}(\rho_{-i})  \doteq \sum_{\rho_{-i} \in \Rho_{-i}} \mu(\rho_i, \rho_{-i})$. Note, however, that $\mu_{-i}(\rho_{-i}) = s^\mu_{-i}(\rho_{-i}) $ so
    \begin{align*}
         \mathbb{E}_{\rho \sim \mu} \left[u_{i}(s_i, \rho_{-i}) \right] &= \sum_{\rho_i \in \Rho_i} \sum_{\rho_{-i} \in \Rho_{-i}} s_i(\rho_i) s^\mu_{-i}(\rho_{-i}) u_i(\rho_i, \rho_{-i})
        \\
        &= u_{i}(s_i, s^\mu_{-i}).
    \end{align*} 
\end{proof}

\begin{proposition}\label{prop:2.1}
     If $\mu$ is an $\epsilon$-CCE of a two-player constant-sum game, then $ |\mathbb{E}_{\rho \sim \mu} \left[ u_i(\rho) \right] -   u_i(s^\mu) | \leq \epsilon$
\end{proposition}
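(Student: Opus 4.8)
The plan is to obtain the absolute-value bound by proving two matching one-sided inequalities, one coming from each player's CCE constraint, and then gluing them together using the zero-sum structure. For brevity set $V_i \doteq \mathbb{E}_{\rho \sim \mu}[u_i(\rho)]$, so the goal is to show $|V_i - u_i(s^\mu)| \leq \epsilon$.

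First I would instantiate the $\epsilon$-CCE condition for player $i$ with the particular deviation $s_i' = s_i^\mu$, namely player $i$'s own marginal strategy. This yields $\mathbb{E}_{\rho \sim \mu}[u_i(s_i^\mu, \rho_{-i})] - V_i \leq \epsilon$. By Lemma~\ref{lemma:2.1} the expectation on the left equals $u_i(s_i^\mu, s_{-i}^\mu) = u_i(s^\mu)$, so this collapses to the clean one-sided bound $u_i(s^\mu) - V_i \leq \epsilon$.

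For the reverse direction I would run the identical argument from the opponent's point of view. Instantiating the $\epsilon$-CCE condition for player $-i$ with the deviation $s_{-i}^\mu$ and again invoking Lemma~\ref{lemma:2.1} gives $u_{-i}(s^\mu) - V_{-i} \leq \epsilon$. Now the zero-sum hypothesis enters: because $u_{-i}(\rho) = -u_i(\rho)$ pointwise, we have both $u_{-i}(s^\mu) = -u_i(s^\mu)$ and $V_{-i} = -V_i$, so the opponent's inequality rewrites as $V_i - u_i(s^\mu) \leq \epsilon$. Combining this with the bound from the previous paragraph gives $-\epsilon \leq V_i - u_i(s^\mu) \leq \epsilon$, which is precisely the claim.

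The conceptual crux — and the only place the zero-sum assumption is used — is this second step. A single player's CCE constraint only certifies that deviating cannot gain more than $\epsilon$, which is inherently a one-sided guarantee; it is the opponent's symmetric constraint, passed through the zero-sum identity, that supplies the matching bound in the other direction. I would expect the only real pitfall to be purely notational: one must keep careful track of the index $-i$ when applying Lemma~\ref{lemma:2.1} and the $\epsilon$-CCE definition to the opponent, since the roles of the ``deviating'' and ``marginal'' players swap.
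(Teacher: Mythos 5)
Your proof is correct and is essentially the paper's argument: both bound the gap by instantiating the $\epsilon$-CCE condition with each player's own marginal strategy ($s_i^\mu$ for player $i$, $s_{-i}^\mu$ for player $-i$), collapsing the expectation via Lemma~\ref{lemma:2.1}, and using the zero-sum identity to convert the opponent's constraint into the reverse inequality. The only difference is presentational — you state the two one-sided bounds directly, while the paper packages the same steps as a proof by contradiction over the two cases $|V_i - u_i(s^\mu)| > \epsilon$ could split into.
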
 

\begin{proof}

    Suppose not, then $ \left | \mathbb{E}_{\rho \sim \mu} \left[ u_i(\rho) \right] -   u_i(s^\mu) \right| > \epsilon$ which means either
    \begin{align}
         u_i(s^\mu) - \mathbb{E}_{\rho \sim \mu} \left[ u_i(\rho) \right] > \epsilon \label{eq:1}
      \end{align}
    or
    \begin{align}
         \mathbb{E}_{\rho \sim \mu} \left[ u_i(\rho) \right] -   u_i(s^\mu)  > \epsilon \label{eq:2}.
    \end{align}
    Consider \eqref{eq:1}. By Lemma~\ref{lemma:2.1} we have $u_i(s^\mu) =  \mathbb{E}_{\rho \sim \mu} \left[ u_i(s_i^\mu, \rho_{-i}) \right]$, which means
    \begin{align*}
        \mathbb{E}_{\rho \sim \mu} \left[ u_i(s_i^\mu, \rho_{-i}) \right]-  \mathbb{E}_{\rho \sim \mu} \left[ u_i(\rho) \right]     > \epsilon,
    \end{align*}

    which contradicts the fact that $\mu$ is an $\epsilon$-CCE, since $s^\mu_i$ is a deviation that is more than $\epsilon$-profitable for player $i$. Next, consider \eqref{eq:2}; since the game is two-player zero-sum, we have:
    \begin{align*}
        & \mathbb{E}_{\rho \sim \mu} \left[ u_i(\rho) \right] -   u_i(s^\mu)  > \epsilon
        \\
        \implies & -\mathbb{E}_{\rho \sim \mu} \left[ u_{-i}(\rho) \right] - (-  u_{-i}(s^\mu))  > \epsilon
        \\
        \implies & u_{-i}(s^\mu)  - \mathbb{E}_{\rho \sim \mu} \left[u_{-i}(\rho) \right] > \epsilon.
    \end{align*}
    At this point we may repeat the steps above to show that $\mu$ is not an $\epsilon$-CCE, since $s^\mu_{-i}$ is a deviation that is more than $\epsilon$-profitable for player $-i$.
\end{proof}

\begin{proposition}
    If $\mu$ is an $\epsilon$-CCE of a two-player constant-sum game $G$, then $s^\mu$ is a $2\epsilon$-Nash equilibrium.
\end{proposition}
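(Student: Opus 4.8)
The plan is to bound the Nash regret $u_i(s_i', s^\mu_{-i}) - u_i(s^\mu)$ for an arbitrary player $i$ and an arbitrary deviation $s_i' \in S_i$, and show it never exceeds $2\epsilon$. The idea is to rewrite this gap entirely in terms of expectations under $\mu$, where the two preceding results apply directly. Since the bound will hold for both players by the symmetric structure of the definitions, establishing it for one generic $i$ and $s_i'$ suffices.

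First I would invoke Lemma~\ref{lemma:2.1} with the deviation $s_i'$ to rewrite the deviation payoff against the marginal as an expectation under $\mu$, namely $u_i(s_i', s^\mu_{-i}) = \mathbb{E}_{\rho \sim \mu}\left[u_i(s_i', \rho_{-i})\right]$. This converts the first term of the Nash gap into a form that the $\epsilon$-CCE condition can directly control. Next I would insert and subtract $\mathbb{E}_{\rho \sim \mu}\left[u_i(\rho)\right]$ to split the gap into two pieces:
\begin{align*}
    u_i(s_i', s^\mu_{-i}) - u_i(s^\mu) &= \left(\mathbb{E}_{\rho \sim \mu}\left[u_i(s_i', \rho_{-i})\right] - \mathbb{E}_{\rho \sim \mu}\left[u_i(\rho)\right]\right) \\
    &\quad + \left(\mathbb{E}_{\rho \sim \mu}\left[u_i(\rho)\right] - u_i(s^\mu)\right).
\end{align*}

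The first bracket is at most $\epsilon$ directly from the definition of $\epsilon$-CCE applied to the deviation $s_i'$. The second bracket is at most $\epsilon$ because Proposition~\ref{prop:2.1} gives $\left|\mathbb{E}_{\rho \sim \mu}\left[u_i(\rho)\right] - u_i(s^\mu)\right| \leq \epsilon$, so in particular this quantity, with the sign as written, is bounded above by $\epsilon$. Summing the two bounds gives $2\epsilon$, which is exactly the required inequality.

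I do not expect a serious obstacle here, since both halves of the decomposition are already handled by the preceding results. The only real content is choosing the right quantity, $\mathbb{E}_{\rho \sim \mu}\left[u_i(\rho)\right]$, to add and subtract so that one half matches the CCE deviation bound and the other matches Proposition~\ref{prop:2.1}; this is where the factor of $2$ in $2\epsilon$ originates. Because $i$ and $s_i'$ were arbitrary and the argument applies verbatim to $-i$, the bound holds for every player and every deviation, establishing that $s^\mu$ is a $2\epsilon$-Nash equilibrium.
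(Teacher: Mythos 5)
Your proof is correct and takes essentially the same route as the paper: both arguments rest on Lemma~\ref{lemma:2.1}, the $\epsilon$-CCE condition, and Proposition~\ref{prop:2.1}, combined through exactly the decomposition of $u_i(s_i', s^\mu_{-i}) - u_i(s^\mu)$ obtained by adding and subtracting $\mathbb{E}_{\rho \sim \mu}\left[u_i(\rho)\right]$. The only difference is presentational: the paper splits into cases according to the sign of $\mathbb{E}_{\rho \sim \mu}\left[u_i(\rho)\right] - u_i(s^\mu)$ (obtaining $\epsilon$ in one case and $2\epsilon$ in the other), whereas you invoke the two-sided bound of Proposition~\ref{prop:2.1} to cover both signs at once, which is slightly more streamlined.
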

\begin{proof}
    Choose player $i$ arbitrarily. Either $u_i(s^\mu) \geq  \mathbb{E}_{\rho \sim \mu}\left[u_i(\rho)\right]$ or $u_i(s^\mu) <  \mathbb{E}_{\rho \sim \mu}\left[u_i(\rho)\right]$. Consider the first case. Starting from the definition of $\epsilon$-CCE:
    \begin{align*}
         & \mathbb{E}_{\rho \sim \mu} \left[u_i(\rho_i', \rho_{-i}) \right]  -  \mathbb{E}_{\rho \sim \mu} \left[u_i(\rho) \right]  \leq \epsilon   \quad \forall \rho_i' \in \Rho_i .
    \end{align*}
    But since $u_i(s^\mu) \geq  \mathbb{E}_{\rho \sim \mu}\left[u_i(\rho)\right]$ we have
    \begin{align*}
         & \mathbb{E}_{\rho \sim \mu} \left[u_i(\rho_i', \rho_{-i}) \right]  - u_i(s^\mu)  \leq \epsilon   \quad \forall \rho_i' \in \Rho_i .
    \end{align*}
    Which by Lemma~\ref{lemma:2.1} means
    \begin{align*}
        u_i(\rho_i', s^\mu_{-i}) -  u_i(s^\mu) \leq  \epsilon  \quad \forall \rho_i' \in \Rho_i .
    \end{align*}

    Thus $s^\mu$ is an $\epsilon$-Nash. Next, suppose $u_i(s^\mu) <  \mathbb{E}_{\rho \sim \mu}\left[u_i(\rho)\right]$. By Proposition \ref{prop:2.1} we have,  
    \begin{align}
        & \mathbb{E}_{\rho \sim \mu} \left[u_i(\rho) \right] - u_i(s^\mu) \leq \epsilon 
        \\
        \implies & \mathbb{E}_{\rho \sim \mu} \left[u_i(\rho) \right]   \leq    u_i(s^\mu) + \epsilon \label{eqn:ineq}.
    \end{align}
    Then, starting from the definition of $\epsilon$-CCE and applying \eqref{eqn:ineq},
    \begin{align*}
         & \mathbb{E}_{\rho \sim \mu} \left[u_i(\rho_i', \rho_{-i}) \right]  -  \mathbb{E}_{\rho \sim \mu} \left[u_i(\rho) \right]  \leq \epsilon   \quad \forall \rho_i' \in \Rho_i 
         \\
         \implies &  \mathbb{E}_{\rho \sim \mu} \left[u_i(\rho_i', \rho_{-i}) \right]  - ( u_i(s^\mu)  + \epsilon) \leq \epsilon   \quad \forall \rho_i' \in \Rho_i 
         \\
        \implies &  \mathbb{E}_{\rho \sim \mu} \left[u_i(\rho_i', \rho_{-i}) \right]  -    u_i(s^\mu) \leq 2 \epsilon   \quad \forall \rho_i' \in \Rho_i .
    \end{align*}
    By Lemma~\ref{lemma:2.1} we have 
    \begin{align*}
        u_i(\rho_i', s^\mu_{-i}) -  u_i(s^\mu) \leq  2\epsilon  \quad \forall \rho_i' \in \Rho_i .
    \end{align*}
\end{proof}

\end{flushleft}
\bibliographystyle{plainnat}
\bibliography{main.bib}
\end{document}